\theoremstyle{plain}
\newtheorem{proposition}{Proposition}
\def\BibTeX{{\rm B\kern-.05em{\sc i\kern-.025em b}\kern-.08em
    T\kern-.1667em\lower.7ex\hbox{E}\kern-.125emX}}
\def\metric{AoD}
\newcommand{\jpcol}[1]{{\color{black} #1}}
\begin{document}

\title{Minimizing Age of Detection for a Markov Source over a Lossy Channel
}

\author{
    \IEEEauthorblockN{Shivang Garde\IEEEauthorrefmark{1}, Jaya Prakash Champati\IEEEauthorrefmark{2}, Arpan Chattopadhyay\IEEEauthorrefmark{1}}
    \IEEEauthorblockA{\IEEEauthorrefmark{1} Department of Electrical Engineering, Indian Institute of Technology, Delhi, India}
    \IEEEauthorblockA{\IEEEauthorrefmark{2} IMDEA Networks Institute, Madrid, Spain}
    \IEEEauthorblockA{$\{$ee1180501, arpan.chattopadhyay$\}$@ee.iitd.ac.in, jaya.champati@imdea.org}
}

\maketitle



\begin{abstract}
Monitoring a process/phenomenon of specific interest is prevalent in Cyber-Physical Systems (CPS), remote healthcare, smart buildings, intelligent transport, industry 4.0, etc. A key building block of the monitoring system is a sensor sampling the process and communicating the status updates to a monitor for detecting events of interest. Measuring the freshness of the status updates is essential for the timely detection of events, and it has received significant research interest in recent times. In this paper, we propose a new freshness metric, Age of Detection (\metric), for monitoring the state transitions of a Discrete Time Markov Chain (DTMC) source over a lossy wireless channel. We consider the pull model where the sensor samples DTMC state whenever the monitor requests a status update. 
We formulate a Constrained Markov Decision Problem (CMDP) for optimizing the \metric{} subject to a constraint on the average sampling frequency and solve it using the Lagrangian MDP formulation and Relative Value Iteration (RVI) algorithm. Our numerical results show interesting trade-offs between  \metric, sampling frequency, and transmission success probability. Further, the AoD minimizing policy provides a lower estimation error than the Age of Information (AoI) minimizing policy, thus demonstrating the utility of AoD for monitoring DTMC sources.
\end{abstract}


\section{Introduction}

Remote monitoring of an information source or a process of interest for detecting the occurrence of an event, e.g., the process exceeding a threshold, is crucial for a wide range of systems that include unmanned aircraft, self-driving cars, smart buildings, and Industry 4.0. For timely detection of the events, the monitor requires fresh status updates which in turn requires a sensor to sample the process as frequently as possible. However, sampling at such a high frequency could result in high energy consumption and a large bandwidth requirement for transmitting the samples to the monitor. On the other hand, sampling at a lower frequency causes staleness in event detection. Also, in practice, the communication channel is not ideal, i.e., the samples transmitted through the channel may not always be successfully received by the receiver. 
In this context, we address the following question: for a discrete-time remote monitoring system with a lossy communication channel, 
\textcolor{black}{at every time slot, given the previously sampled state of the source, whether to sample or not to optimize both frequency of sampling and staleness.} 

We address the above question for sources modeled by finite-state DTMC, which received considerable attention recently \cite{Kam2018,Kam2020,Yutao2021,Maatouk2020,ChampatiTCOM2022}. The events we seek to identify are the new state transitions in the DTMC. The motivation for studying the DTMC source is that it can serve as an abstract model to describe the states of a plant or a process in industries. For example, we may classify the states of a plant into “good” and “bad” and study a 2-state DTMC by deriving the transition probabilities from the history of observations made on the plant. In this paper, we consider that the DTMC states are unknown to the sensor between sampling instants. This is relevant in monitoring applications such as machine fault detection\cite{Wade2015}, where an IoT sensor wakes up periodically and samples the machine state (vibrations, rotation speed, etc.) and does not know the state of the machine in between the sampling instants.


Quantifying the staleness of status updates using the Age of Information (AoI) metric was extensively studied for a wide variety of queuing and wireless networking systems in the past decade; see for example~\cite{kaul_2011a,kaul_2012a,Chen2016,Costa_2016,Najm_2016,Najm_2017,Yoshiaki2018,Soysal2019,YatesSurvey2020,SunBook2020,KostaBook2017}. AoI is defined as the time elapsed since the generation of the freshest update available at the destination. 
AoI and any cost function of AoI, e.g., see \cite{Tripathi2019}, only capture the timing of the information update and do not account for the semantics or the dynamics of the information source. Thus, there has been significant research interest in studying new freshness metrics that account for variations or states of the source. Some early studies proposed freshness metrics based on mutual information in~\cite{SunBenjamin2018} and conditional entropy in~\cite{Feng2020}. More recently, an entropy-based freshness metric was also studied in \cite{Chen2024}, where the authors consider the problem of selecting a subset of Markov sources for transmitting over unreliable shared channels. The problem is solved using the restless MAB framework. Unlike their metric, we define a new freshness metric, Age of Detection (AoD), as a per-slot cost in an MDP and solve a CMDP to minimize AoD subject to a sampling frequency constraint.

Another semantics-based freshness metric \textit{Age of Incorrect Information} (AoII) proposed in \cite{Maatouk2020} received significant research attention. It is defined as a product of two functions: one is an increasing function of time until the next update is received, and the other is a general penalty function for mismatch in the states at the source and the receiver. Different instantiations of AoII were studied in \cite{Maatouk2020,Kam2018,Kam2020,Yutao2021} for estimating the states of a DTMC source. However, these works consider a parameterized symmetric DTMC for analytical tractability, where the transition probability matrix has the following special structure. For any state, the self-transition probability is $p_t$, and the transition probability to any other state is $p_R$. Consequently, these values have to satisfy $p_R + (N -1)p_t = 1$, where $N$ is the number of states. 

In contrast to the above works, in our previous work~\cite{ChampatiTCOM2022}, we studied the $N$-state ergodic DTMC without further assumptions on the structure of the state transition probability matrix and considered that the sensor only knows the state of the DTMC when it samples. We proposed the \textit{age penalty} metric, defined as the time elapsed since the first DTMC state change occurred since the last sampling instant. 
We analyzed the age penalty metric for the scenario where the communication between the sensor and the monitor is either negligible or takes exactly one slot. The age penalty metric was independently studied under Age of Outdated Information (Ao$^2$I) in~\cite{Liu2022}. However, \cite{ChampatiTCOM2022} and \cite{Liu2022} assumed a deterministic communication delay of one slot for any status update transmission. In this paper, we relax this assumption by considering a lossy wireless channel. The random delays induced by the losses cannot be accounted for in the age penalty metric, which motivates the AoD definition. \jpcol{In a slot, if the receiver sends no request for a new update or if there is a transmission failure, \metric{} induces a cost that is a function of the time elapsed since the generation time of the freshest update at the receiver and the time elapsed since the latest request sent.}  A salient feature of \metric{} is that it is tractable, and we show that it reduces to the age penalty metric under the assumption that the communication channel has no losses. 

Our main contributions are summarized below:
\begin{itemize}
    \item We propose \metric{} for quantifying the freshness of the information in monitoring the state transitions of an  $N$-state ergodic DTMC over a lossy wireless channel. 
    \item We study the trade-off between \metric{} and sampling frequency by formulating a Constrained Markov Decision Problem (CMDP) in which we minimize \metric{} subject to a constraint on the sampling frequency. 
    \item Our numerical results demonstrate that, for a fixed sampling frequency constraint, \metric{} exponentially decreases with an increase in transmission success probability. We also evaluate the performance of the CMDP solution by comparing it with the \metric{} achieved by the optimal clairvoyant policy, which assumes the complete knowledge of the state transitions in the DTMC and samples the DTMC exactly when a new state transition happens. 
    \item Finally, using simulation, we demonstrate that the estimation error under the \metric{} minimizing policy is lower than that of the AoI minimizing policy. More interestingly, under some settings, the \metric{} minimizing policy achieves lower error at a lower sampling frequency. 
\end{itemize}

\section{Age of Detection: Definition and Analysis}
\begin{figure}
	\centering
	\includegraphics[scale=0.30]
	{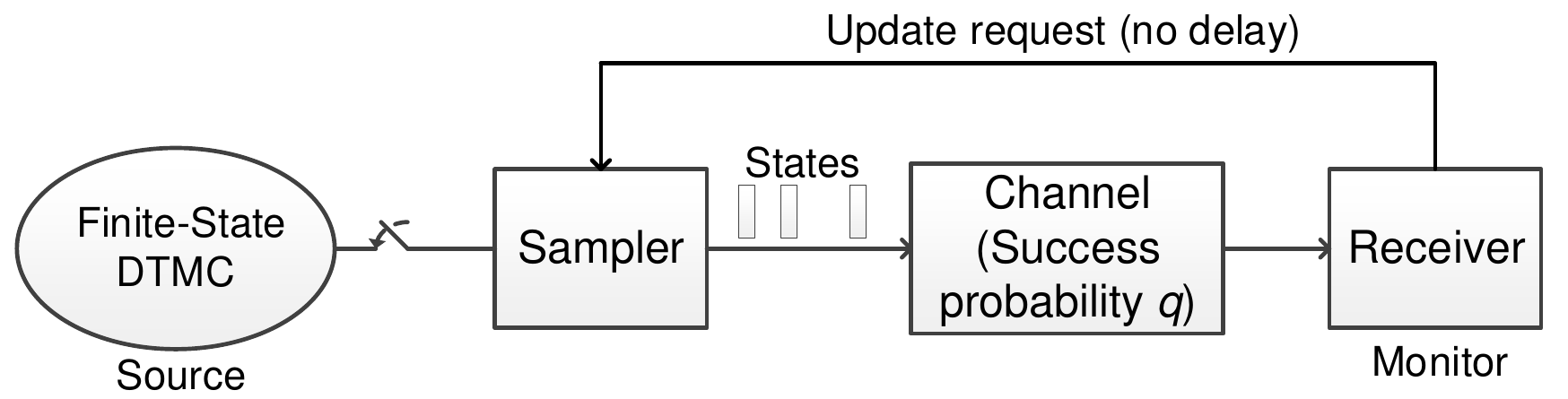}
	\caption{The pull model used for sampling information for a DTMC source's state transitions.}
	\label{fig:model}
\end{figure}
\subsection{System Model}
We consider the system shown in \Cref{fig:model}. The source is a $N$ state ergodic DTMC. The set of states is given by $\mathcal{X} = \{1,2,\ldots, N\}$. Time is slotted with the slot duration equal to the unit of time of the DTMC. The monitor/receiver aims to detect the state transitions of the DTMC. We study the pull model, where at the start of a slot, the receiver has to send a request to a sensor at the source to obtain the state of DTMC in that slot. We assume this request is received instantaneously by the sensor. When the sensor receives the request, it samples the DTMC's state and transmits it over a lossy channel. The success probability of receiving the transmitted sample/state within a time slot is $q$.  If the transmission is unsuccessful, the sensor keeps re-transmitting the state until either a new request is received; in this case, it samples the latest state and transmits it, or the most recent transmission is successful. When the monitor receives a state, it immediately transmits 

In \Cref{fig:System}, we show sample generation and reception time slots. In time slot $t$, $T_{G}(t)$ denotes the generation time of the latest state that is available at the monitor, and $T_R(t)$ denotes its reception time. Note that some generated states are lost in scenarios where a state is transmitted unsuccessfully in the current slot, and the sensor receives a request from the monitor at the start of the next slot. We use $T_g(t)$ to denote the time of the latest request. Further, we define $\tau_1 = T_g(t) - T_G(t)$ and $\tau_2 = t - T_g(t)$. Note that, at time $t$, a freshness metric that measures the staleness of the state at the monitor requires keeping track of the DTMC states change in the time elapsed since $T_G(t)$, given by $\tau_1+\tau_2$. Toward this end, we define AoD as the cost incurred incurred per slot in a Markov Decision Process (MDP). We define the elements of the MDP below.

\begin{figure}[h]
	\centering
	\includegraphics[scale=0.4]{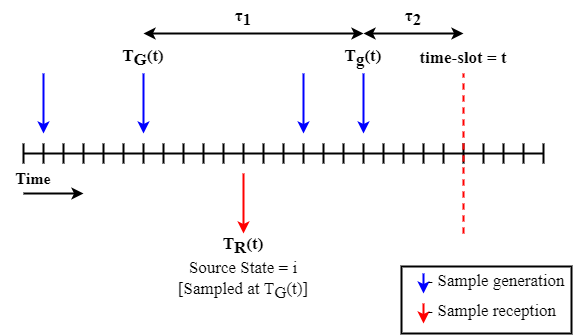}
	\caption{A pictorial representation showing the sample generation and reception instances over a short period of time.}
	\label{fig:System}
\end{figure}

\subsection{Elements of MDP}
\subsubsection{\textbf{State space}}
The state of the MDP, denoted by $S_t$, is realized by the tuple $(\tau_{1}, \tau_{2}, i, j)$, where $i \in \mathcal{X}$ and $j \in \mathcal{X}$ denote the states of the DTMC in time slots $T_{G}(t)$ and $T_g(t)$, respectively. The reason that we keep track of $\tau_1$ and $\tau_2$ separately instead of their sum will be clarified shortly.

\subsubsection{\textbf{Action space}}
The action space $\mathcal{U}=\{0,1\}$, where in time slot $t$, the action $u_t = 1$ denote request sent by the monitor, and $u_t = 0$, otherwise. 

\subsubsection{\textbf{Transition Probabilities}}
Let the state of the MDP at time $t$ be $S_{t} = (\tau_{1}, \tau_{2}, i, j)$, then the next state will depend on whether the monitor requests for an update and successful/unsuccessful transmission in the current slot. For the time slots in which the monitor does not send a request, i.e., $u_t=0$ and $T_g(t)$ does not change, the transition probabilities are given by
\begin{align}
   &P(S_{t+1} = (\tau_{1}, \tau_{2}+1, i, j)~|S_{t},u_{t} = 0) = 1-q \label{eq1:transProb}\\
    &P(S_{t+1} = (0, \tau_2+1, j, j)~|S_{t},u_{t} = 0) = q. \label{eq2:transProb}
\end{align}
The transition probability in \eqref{eq1:transProb} corresponds to an unsuccessful transmission with probability $1-q$. Recall that $\tau_1 = T_g(t) - T_G(t)$ and $\tau_2 = t - T_g(t)$. Therefore, in \eqref{eq1:transProb}, the only change in the state possible is $\tau_2$ is increased. Similarly, in \eqref{eq2:transProb}, when the transmission is successful with probability $q$, then the state $j$ at time $T_g(t)$ is received by the monitor, $T_G(t)$ is set to $T_g(t)$ resulting in $\tau_1 = 0$ $\tau_2$, and $\tau_2$ is increased by $1$.

When $u_t = 1$, we have $T_g(t) = t$. Let $j'$ denote the state sampled in $t$, then the transition probabilities are
\begin{align}
    &P(S_{t+1} \!=\! (\tau_{1}+ \tau_{2},1, i, j')~|S_{t},u_{t} \!=\! 1) \!=\! p_{ij'}^{(\tau_{1}+\tau_{2})}(1-q) \label{eq3:transProb}\\
      &P(S_{t+1} = (0, 1, j', j')~|S_{t},u_{t} = 1) = p^{(\tau_1+\tau_2)}_{ij'}q. \label{eq4:transProb}
\end{align}
Note that since the new state $j'$ is observed in time slot $t$, the probability of transition to state $j'$ from state $i$ is given by $p^{(\tau_1+\tau_2)}_{ij'}$. If the transmission in slot $t$ is unsuccessful, then $T_g(t) - T_G(t) = \tau_1+\tau_2$ as given in \eqref{eq3:transProb}, otherwise $T_g(t) - T_G(t) = 0$ as given in \eqref{eq4:transProb}.

\subsubsection{\textbf{\metric}}
We define \metric, denoted by $c(S_{t}, u_{t})$, as a per-slot cost which is given below.
\begin{multline*}
c(S_{t}, u_{t}) = u_{t}(1-q)\sum_{j=1, j\neq i}^{N}p_{ij}^{(\tau_{1})}(1-p_{jj}^{\tau_{2}-1}) \\ + (1-u_{t})\sum_{j=1, j\neq i}^{N}p_{ij}^{(\tau_{1})}(1-p_{jj}^{\tau_{2}-1}).   
\end{multline*}
\metric{} defined above can be simplified as
\begin{equation}\label{eqn: cost function}
c(S_{t}, u_{t}) = (1-qu_{t})\left(1-p_{ii}^{(\tau_{1})}-\sum_{j=1, j\neq i}^{N}p_{ij}^{(\tau_{1})}p_{jj}^{\tau_{2}-1} \right)    
\end{equation} 
\textcolor{black}{The intuition behind defining \metric{} is to capture the error in the estimation of the state along with maintaining the freshness of information. The term $p_{ij}^{(\tau_{1})}(1-p_{jj}^{\tau_{2}-1})$ captures the probability that the source attained state $j$ at latest sampling time $T_g(t)$ but left the state $j$ before current time $t$. In such a case, we must penalize the system for wrongly assuming the state as $j$. We choose the penalty-per-slot to be the probability of error in state estimation. Also, the only case where the system should not be penalized is when $q=1$ and $u=1$, for which AoD is zero (cf. \eqref{eqn: cost function}). This is consistent with our understanding that when the decision is to sample, and the transmission is successful, then the estimation error is minimal, and the information is as fresh as it can be.}

To further motivate the definition of AoD in \eqref{eqn: cost function}, we claim that AoD extends the metric age penalty \cite{ChampatiTCOM2022} (or Ao$^2$I \cite{Liu2022}) for wireless channels with losses. If the wireless channel has no loss, then the average AoD is equal to the average age penalty. We state this in the following proposition.
\begin{proposition}
If the wireless channel has no loss, i.e., $q=1$, then the average AoD equals the average age penalty.
\end{proposition}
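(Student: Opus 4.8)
The plan is to show that, when $q=1$, the \metric{} minimization problem collapses onto the age-penalty problem of \cite{ChampatiTCOM2022}: the induced Markov dynamics and the per-slot cost both reduce to their age-penalty counterparts, so the two long-run time averages must agree for every policy. First I would substitute $q=1$ directly into the transition kernel \eqref{eq1:transProb}--\eqref{eq4:transProb} and into the simplified cost \eqref{eqn: cost function}, and then argue that the resulting process is a relabeling of the age-penalty Markov chain.

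For the dynamics, the observation is that with $q=1$ every transmission succeeds within the slot in which it is attempted, so no sampled state is ever lost. Consequently the event ``a request is sent but the update is not yet delivered'' has probability zero, the gap $\tau_1=T_g(t)-T_G(t)$ is degenerate, and the two DTMC coordinates $i$ and $j$ always coincide. I would use this to show that the state $(\tau_1,\tau_2,i,j)$ effectively reduces to the pair (time elapsed since the last sample, last sampled state), which is precisely the state carried by the age-penalty MDP; the transition probabilities in \eqref{eq2:transProb} and \eqref{eq4:transProb} (the $q=1$ branches) then reproduce exactly the age-penalty transitions.

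For the cost, I would set $q=1$ in \eqref{eqn: cost function} and simplify it, using the degeneracy of $\tau_1$, to the instantaneous detection penalty of \cite{ChampatiTCOM2022}, i.e.\ the probability that the source has already departed from the state recorded at the last sampling instant. Matching this expression term-by-term against the age-penalty per-slot cost is the computational heart of the argument. Once the per-slot costs and the dynamics are shown to coincide, equality of the averages follows immediately: for any stationary policy the two cost sequences are identically distributed, so their Ces\`aro limits---equivalently, the renewal--reward ratios over a sampling cycle---are equal, and in particular the optimal averages agree.

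The step I expect to be the main obstacle is the careful cost reduction at $q=1$. It requires reconciling the two models' timing conventions---the within-slot transmission assumed here versus the one-slot communication delay used in \cite{ChampatiTCOM2022}---and tracking how $\tau_1$ and the $\tau_2-1$ exponent behave in the degenerate regime, so that the collapsed \metric{} cost is shown to equal the age-penalty cost exactly rather than merely up to an additive or index shift. Verifying this alignment, rather than the routine averaging argument, is where the real work lies.
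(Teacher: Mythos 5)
Your proposal follows essentially the same route as the paper's proof: set $q=1$, observe that $\tau_1$ degenerates and $i=j$ so that no request is pending between receptions (hence $u_t=0$ there), reduce the per-slot cost to the departure probability $1-p_{jj}^{\tau_2-1}$, and close with a renewal--reward argument over sampling cycles. The one step you defer as ``the computational heart'' is exactly what the paper carries out: it sums the per-slot cost over a cycle of length $\tau$, obtaining $\sum_{\tau_2=1}^{\tau}\bigl(1-p_{jj}^{\tau_2-1}\bigr)=\tau-\frac{1-p_{jj}^{\tau}}{1-p_{jj}}$, which is precisely the age penalty expression (equation (4) of \cite{ChampatiTCOM2022}), after which the Renewal--Reward Theorem gives equality of the averages.
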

\begin{proof}
Given $q=1$, we have,
\begin{equation*}
c(S_{t}, u_{t}) = (1-u_{t})\sum_{j=1, j\neq i}^{N}p_{ij}^{(\tau_{1})}(1-p_{jj}^{\tau_{2}-1}).  
\end{equation*}
Further, if $q=1$, we have $T_g(t) = T_G(t)$, which implies $\tau_1 = 0$, and there is exactly a unit delay in transmission. So, if we look at the period between any two receptions, there will be no new sampling instances; hence, $u_{t} = 0$ for all time slots between two receptions. 
Also, any state $j$ sampled is always received, implying that $i = j$, and we do not have the expectation taken with respect to any other state in the cost definition of AoD.
Therefore, for  $q=1$, \metric{} for each slot in between two receptions is given by
\begin{equation*}
c(S_{t}, u_{t}=0) = 1-p_{jj}^{\tau_{2}-1}.   
\end{equation*}
Let $\tau$ denote the number of time slots between two receptions.
Summing the above equation over values of $\tau_2$ from $1$ to $\tau$, because after each reception $\tau_2$ starts from $1$, we get, 
\begin{align*}
\sum_{\tau_2=1}^{\tau}(1-p_{jj}^{\tau_2-1}) &= \tau - \sum_{\tau_2=1}^{\tau}p_{jj}^{\tau_2-1}\\
&=\tau - \left(\frac{1-p_{jj}^{\tau}}{1-p_{jj}}\right).   
\end{align*}
The expression above, which is the sum of the \metric{} between two receptions, is exactly the age penalty metric (cf. equation (4) from \cite{ChampatiTCOM2022}). Given this result, we use the Renewal-Reward Theorem to show that the average AoD equals the average age penalty.
\end{proof}

We aim to minimize the time-averaged expected \metric{} under the sampling frequency constraint. To this end, given the elements of MDP, we formulate the following problem:
\begin{equation}\label{eqn: optimization problem}
\begin{aligned}
\min_{\pi \in \Pi} \quad & \lim_{T \to \infty} \mathbb{E}^{\pi}\left[\frac{1}{T}\cdot \sum_{t=0}^{T-1} c_{t}(S_{t}, u_{t})  \right]\\
\textrm{subject to} \quad & \lim_{T \to \infty} \mathbb{E}^{\pi}\left[\frac{1}{T}\cdot \sum_{t=0}^{T-1} u_{t}  \right] \leq \nu \\
\end{aligned}
\end{equation}
This is a \textit{Constrained Markov Decision Problem (CMDP)}. 

\subsection{Lagrangian MDP}
We use the Lagrangian relaxation approach to convert the CMDP to an unconstrained MDP. 
At time slot $t$, if the state of the MDP is $S_{t}$ and the action chosen is $u_{t}$, then the relaxed per-stage cost is defined as
\[
c_{t}^{\lambda}(S_{t}, u_{t}) = c_{t} + \lambda u_{t}, 
\]
where $\lambda \geq 0$ is a Lagrange multiplier. 
Then, the Lagrangian MDP for the constrained optimization problem will be, for $\lambda \geq 0$, 
\begin{equation}\label{eqn: lagrangian opt problem}
\begin{aligned}
\min_{\pi \in \Pi} \quad & \lim_{T \to \infty} \mathbb{E}^{\pi}\left[\frac{1}{T}\cdot \sum_{t=0}^{T-1} c_{t}^{\lambda}(S_{t}, u_{t})  \right].\\
\end{aligned}
\end{equation}
Let $\pi_{\lambda}^{*}$ denote the optimal Lagrangian MDP policy for a given $\lambda$. Then, from the Lagrangian duality result the minimum average \metric{} in the CMDP is given by
\begin{equation}\label{eqn: lambda}
\max_{\lambda \geq 0} \left\{\lim_{T \to \infty} \mathbb{E}^{\pi_{\lambda}^{*}}\left[\frac{1}{T}\cdot \sum_{t=0}^{T-1} c_{t}^{\lambda}(S_{t}, u_{t})  \right] - \lambda \nu \right\}.
\end{equation}
\textcolor{black}{It is well-known that if there exists a $\lambda^* >0$,  and a policy $\pi_{\lambda^*}^*$ which is optimal for the relaxed MDP under $\lambda^*$ and satisfies the constraint in the CMDP with equality, then $\pi_{\lambda^*}^*$ is optimal for the CMDP as well. It is also well-known that the mean number of requests under $\pi_{\lambda}^*$ decreases in $\lambda$. Hence, if the constraint can not be met with equality under any deterministic $\pi_{\lambda}^*$ for any $\lambda>0$, then we need to randomize between $\pi_{\lambda^*+}^*$ and $\pi_{\lambda^*-}^*$ for a suitable $\lambda^*$ to meet the constraint in the CMDP \eqref{eqn: optimization problem}; see \cite{4048912,BEUTLER1985236}. Using these results, we solve the problem numerically and obtain the optimal policies and the average \metric{}.}

\section{Numerical Results}


\subsubsection{Variation with transition probability} \textcolor{black}{For simulations, we consider a two-state Markov source with state transition probabilities $p_{01}$ and $p_{10}$ between states $0$ and $1$. We choose the default values for transmission success probability $q = 0.8$ and the constraint on sampling frequency $\nu = 0.1$. We limit the values of $\tau_1$ and $\tau_2$ to be in the range $1$ to $20$, i.e., when these variables cross the threshold, we assume a successful reception which resets the value of that variable and obtain the policy for different values of state transition probabilities. This assumption is valid since the probability of a successful transmission, when the system has re-transmitted as many as $20$ times, is close to $1$ for the chosen range of values for $q$. Also, this ensures that the simulations run in a reasonable amount of time. We obtain $\lambda^* = 0.3$ by solving \cref{eqn: lambda} and take two values, one slightly bigger ($\lambda^*+ = 0.3001$), other slightly lower ($\lambda^*- = 0.2999$) than $\lambda^*$ to obtain two pure policies $\pi_{{\lambda}^*-}$ and $\pi_{{\lambda}^*+}$, and compute the randomization factor $= 0.428$, which results in the optimal policy $\pi^*$. We obtain the following results for the optimal $(\tau_1,\tau_2)$, given the states $(i, j) = (0, 0)$, i.e., the latest received state and the latest sampled state is $0$.}


\begin{figure}
    \centering
    \includegraphics[scale=0.53]{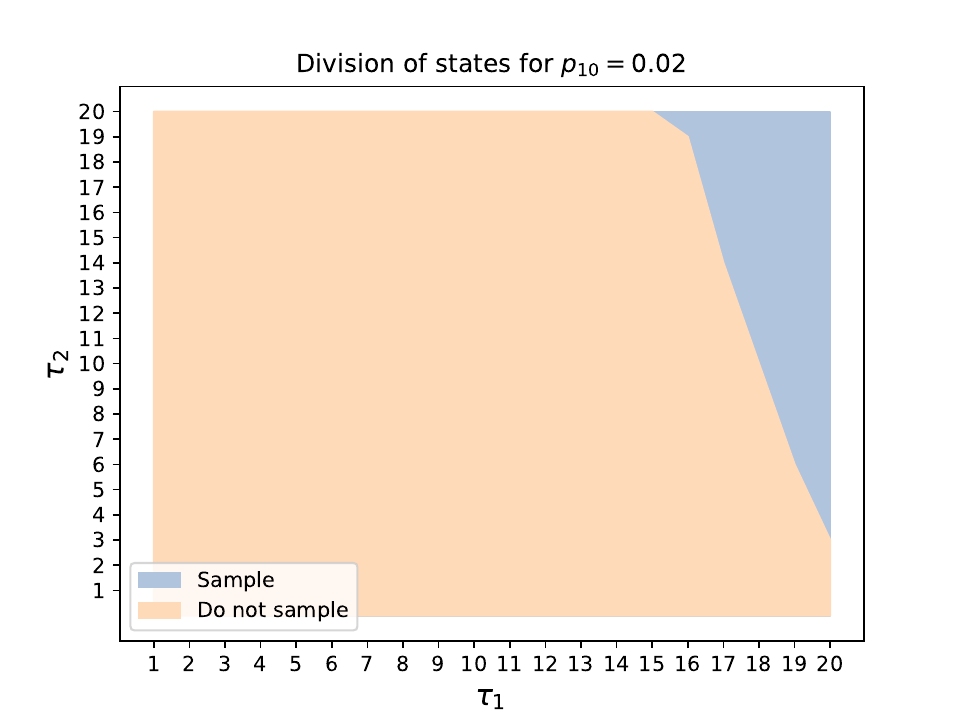}
    \caption{Division of states on the basis of optimal decision for each state for $p_{01}=0.02$, $p_{10}=0.01$, $q = 0.8$ and $\nu=0.1$.}
    \label{fig:p=0.02}
\end{figure}


\begin{figure}
    \centering
    \includegraphics[scale=0.53]{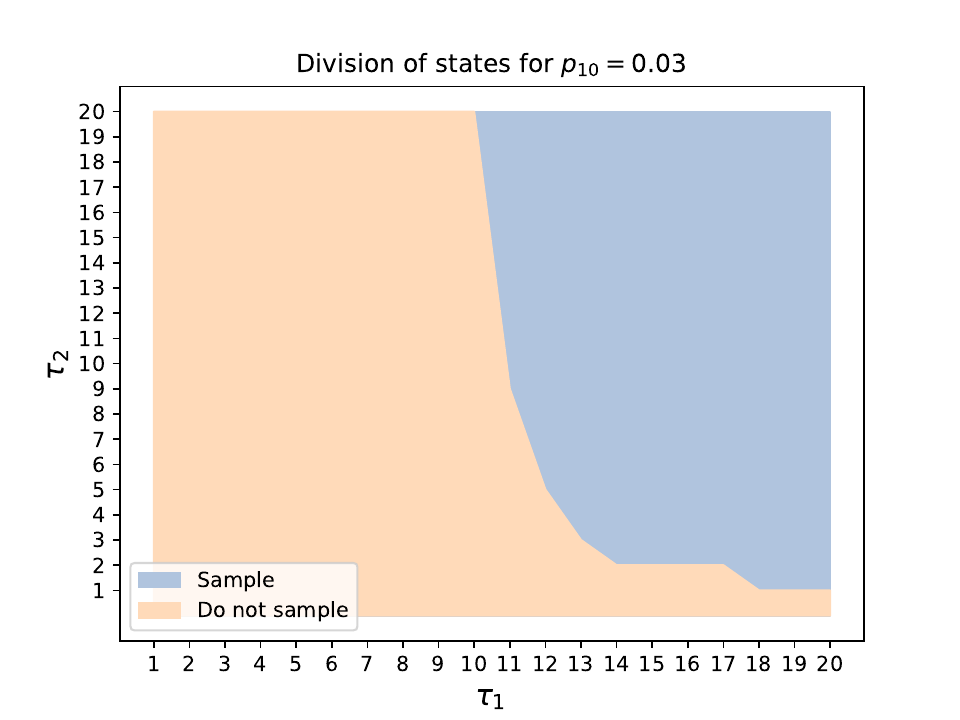}
    \caption{Division of states on the basis of optimal decision for each state for $p_{01}=0.03$, $p_{10}=0.01$, $q = 0.8$ and $\nu=0.1$.}
    \label{fig:p=0.03}
\end{figure}


\begin{figure}
    \centering
    \includegraphics[scale=0.53]{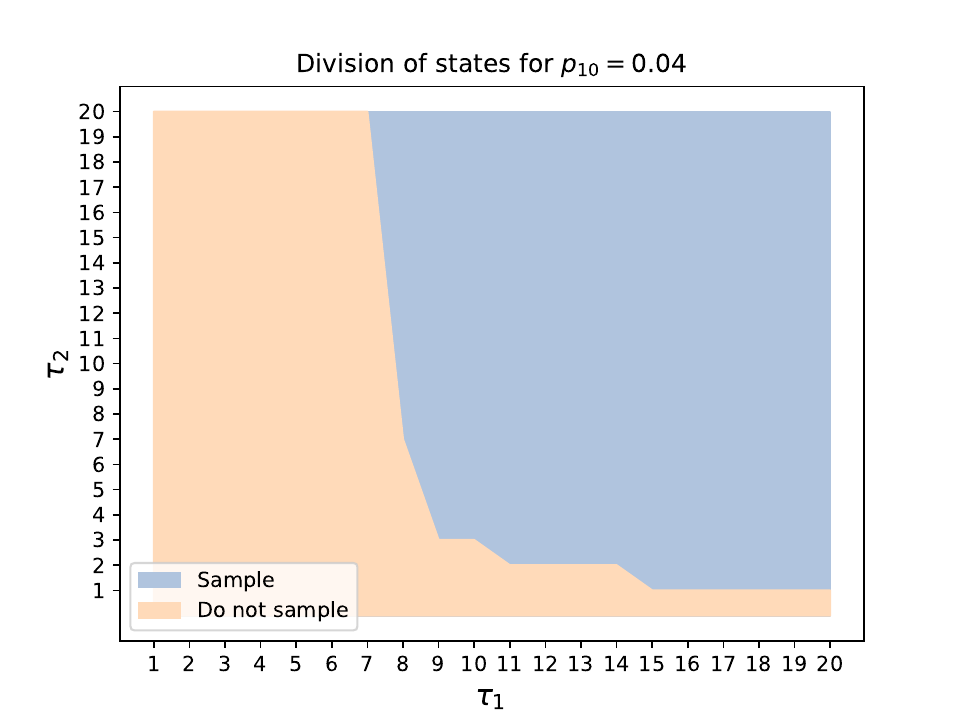}
    \caption{Division of states on the basis of optimal decision for each state for $p_{01}=0.04$, $p_{10}=0.01$, $q = 0.8$ and $\nu=0.1$.}
    \label{fig:p=0.04}
\end{figure}

\textcolor{black}{For $p_{10} = 0.01$, we compute the optimal decision for sample or not to sample for all pairs $\tau_1$ and $\tau_2$ in the chosen range. They are then plotted with in Figures \ref{fig:p=0.02}, \ref{fig:p=0.03}, and \ref{fig:p=0.04} for $p_{01}$ values $0.02$, $0.03$, and $0.04$, respectively. The area in the blue region shows the $(\tau_1, \tau_2)$ for which the optimal decision is to sample.}
\textcolor{black}{We observe that there is a threshold structure for values of $\tau_1$ and $\tau_2$ given different values of transition probabilities, i.e., it is possible to find a $(\tau_1^*, \tau_2^*)$, such that for all $\tau_1 \geq$ $\tau_1^*$ and $\tau_2 \geq$ $\tau_2^*$, the optimal decision would be to sample. As transition probability $p_{01}$ increases, the thresholds for $\tau_1$ and $\tau_2$ reduces. This is consistent with the fact that if the latest received state is $0$, as the transition probability to state $1$ increases, a smaller sampling interval warrants freshness of detecting the state transition. This demonstrates the efficacy of AoD minimizing policy.}

The average age penalty for varying transition probability $p_{01}$ and $p_{10} = 0.01$ is shown in \Cref{fig:avg_age}. We observe that as $p_{01}$ increases, the average \metric{} decreases. This is because as $p_{01}$ increases, the steady state probability of state $1$ increases compared to state $0$, which means the DTMC moves out of state $1$ less frequently, and therefore, an optimal sampling policy achieves lower average AoD. In \Cref{fig: age_vs_samp}, we show average AoD versus the upper bound on the average sampling frequency. As expected, \metric{} decreases with the upper bound. 
\begin{figure}
    \centering
    \includegraphics[scale=0.45]{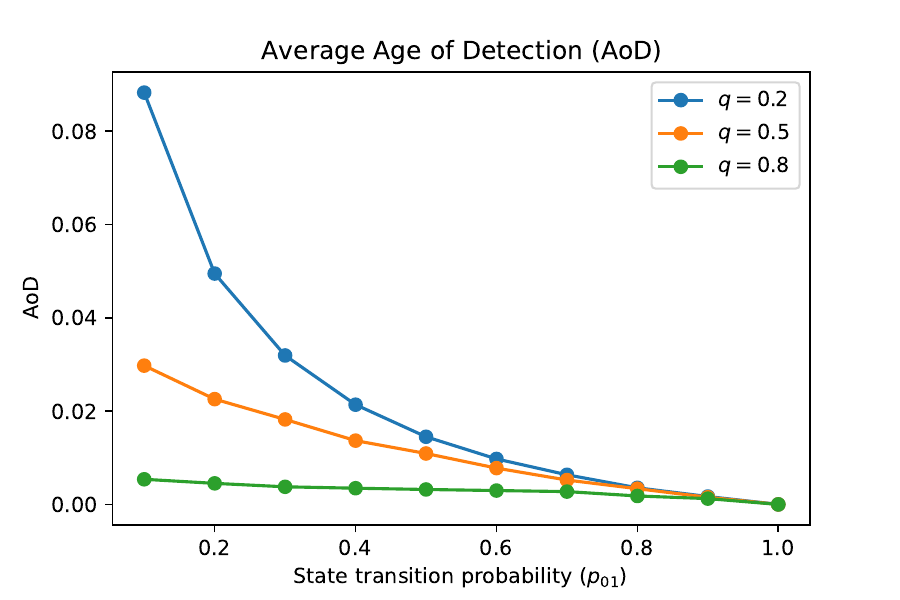}
    \caption{Variation of the average AoD for different values of state transition probability $p_{01}$ at three different values of $q$ for $p_{10}=0.01$, $\nu = 0.1$.}
    \label{fig:avg_age}
\end{figure}


 \begin{figure}
     \centering
     \includegraphics[scale=0.45]{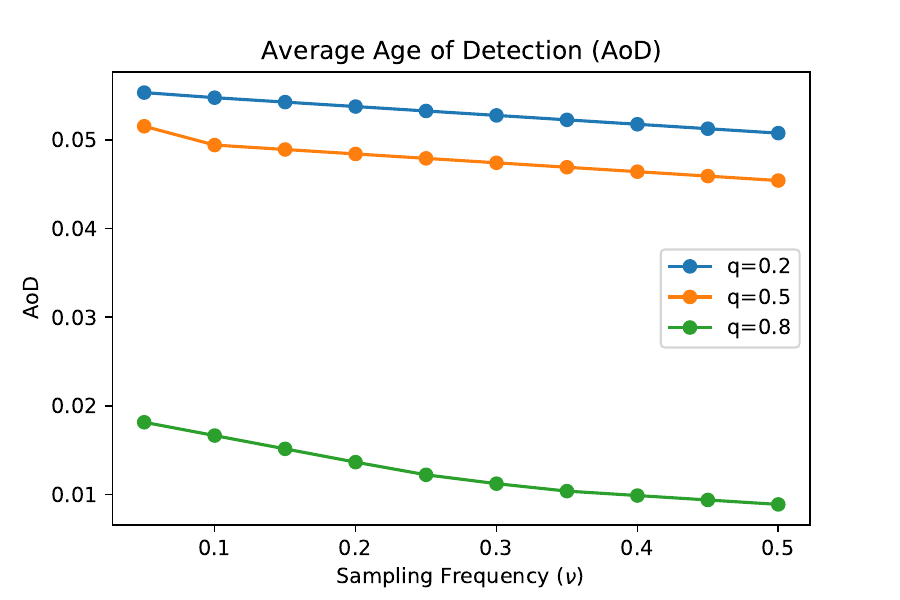}
     \caption{Variation of the average AoD for different values of Sampling frequency $(\nu)$ at three different values of $q$ for $p_{01} = 0.03$ and $p_{10} = 0.01$.}
     \label{fig: age_vs_samp}
 \end{figure}

\subsubsection{Variation with transmission success probability $q$} In \Cref{fig:avg_age_with_q}, we show the average \metric{} for varying $q$. We observe that with an increase in $q$, there is a sharp decline in the average \metric. This is because as $q$ increases, the number of times we need to transmit a packet for success decreases, resulting in more accurate state detection at the receiver, decreasing the average \metric. 

\begin{figure}
    \centering
    \includegraphics[scale=0.45]{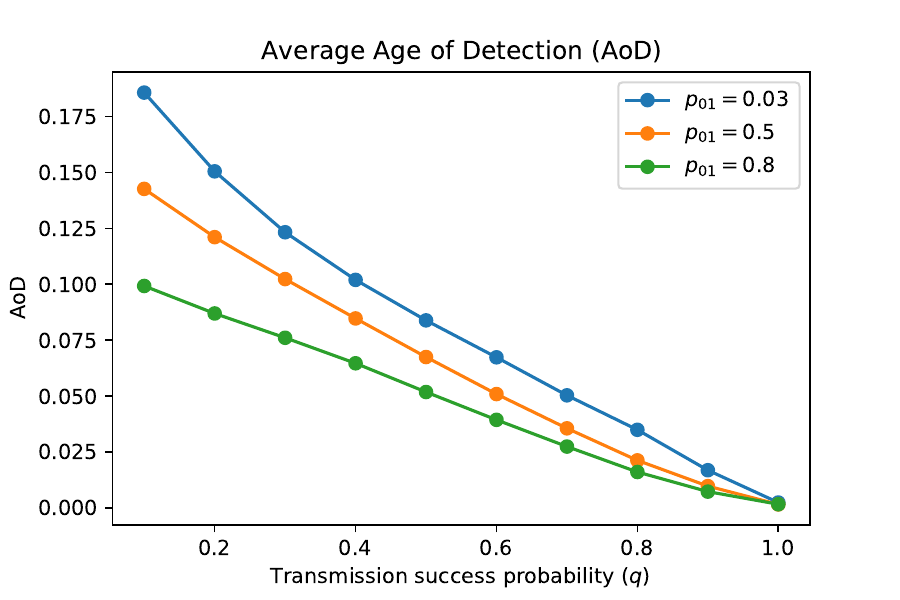}
    \caption{Variation of the average AoD for different values of transmission success probability $(q)$ at three different values of $p_{01}$ for $p_{10}=0.01$, $\nu = 0.1.$}
    \label{fig:avg_age_with_q}
\end{figure}

\subsubsection{Comparison of estimation error under \metric{} and AoI} In this subsection, we compare the average estimation error achieved under AoD and AoI minimization policies. In a given time slot, the error is equal to $1$ if the freshest state available at the receiver is not equal to the actual state of the DTMC, and is $0$, otherwise. For computing AoI minimization policy, we use the algorithm from~\cite{sun2017update}. For the parameters we have chosen, the AoI minimization policy turns out to be the zero-wait policy, i.e., sample and send immediately after every successful transmission. Hence, for this policy, the average sampling frequency trivially equals the transmission success probability $q$. For \metric, we set different sampling frequency constraints $\nu$ to calculate the estimation error. From Figure \ref{fig:AoD_vs_AoI_0.8}, we observe that for a stringent frequency constraint  $\nu = 0.2$ in the AoD minimization problem, the AoI-based estimation error is lower than that of \metric{} for all values of $p_{10}$ because the AoI sampling frequency is $0.8$. However, for $\nu = 0.6$ and $\nu = 0.8$, the \metric{} estimation error is lower than that of AoI. Thus, using \metric{} we can obtain a lower estimation error even at a lower sampling frequency $\nu =0.6$ compared to that of AoI. 
Also, from Figures \ref{fig:AoD_vs_AoI_0.8}, we observe that, as we increase $p_{10}$ the steady state probability of state $1$ decreases, which results in missing the detection of state $1$ under AoI minimizing policy and thus resulting in an increase in the estimation error with $p_{01}$. This is not the case with \metric{} minimizing policy because the cost is defined to take into account the last observed state. Further, the estimation error under the \metric{} minimizing policy decreases with an increase in the steady state probability of one state compared to the other. This shows that \metric{} is a better metric as it reduces the number of samples to achieve the same estimation error. 

Further, in Figure \ref{fig:AoD_vs_AoI_0.8_MAP}, we present the MAP estimation error, which is computed by the receiver using the MAP estimate of the current state instead of using the freshest state to compute the estimation error. Although less pronounced, we again observe that \metric{} minimizing policy archives lower MAP estimation error. These results demonstrate the advantage of AoD for measuring freshness for detecting DTMC state transitions. 


\begin{figure}
    \centering
    \includegraphics[scale=0.45]{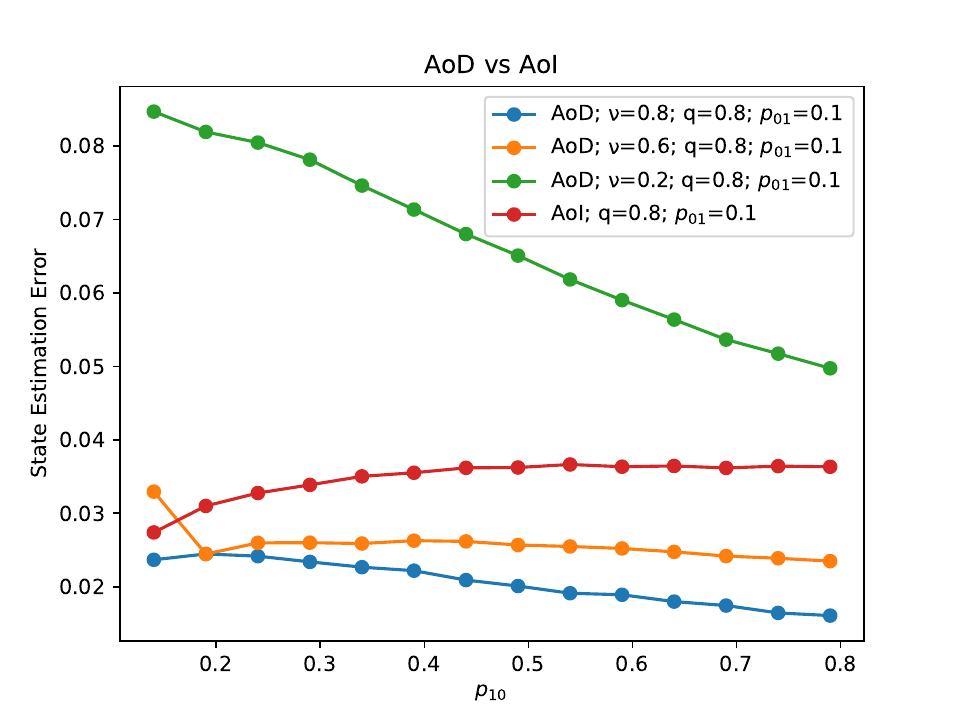}
    \caption{Comparison of state estimation error for \metric{} and AoI for different values of sampling frequency constraint $(\nu)$ for $q=0.8$ and $p_{01}=0.1$.}
    \label{fig:AoD_vs_AoI_0.8}
\end{figure}

\begin{figure}
    \centering
    \includegraphics[scale=0.45]{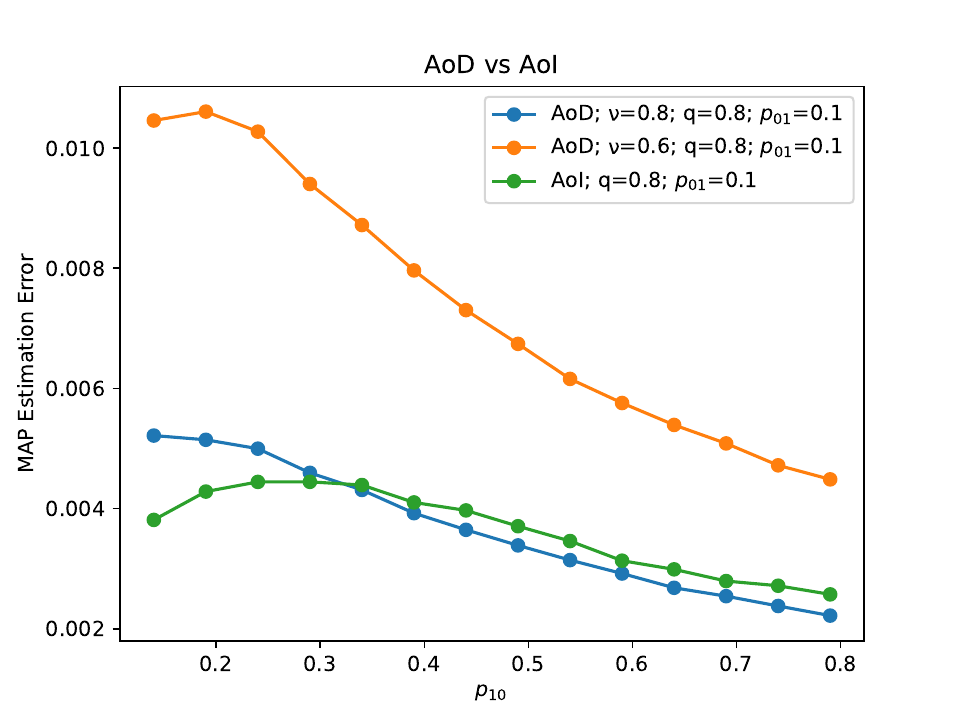}
    \caption{Comparison of MAP estimation error for \metric{} and AoI for different values of sampling frequency constraint $(\nu)$ for $q=0.8$ and $p_{01}=0.1$.}
    \label{fig:AoD_vs_AoI_0.8_MAP}
\end{figure}

\section{Conclusion}
In this work, we have proposed a novel freshness metric, Age of Detection (\metric), for detecting transitions of a DTMC source sending updates over a lossy channel. In contrast to the classical AoI metric, \metric{} takes into account the state of the DTMC, and we have shown that it extends the existing metrics age penalty and Ao$^2$I for lossy channels. Given a sampling frequency constraint, the \metric{} minimization problem is formulated as a CMDP, which is solved using the Lagrangian approach. Using simulation, we have demonstrated that the estimation error under the \metric{} minimizing policy -- sampling at a lower frequency in some settings -- is lower than that of the AoI minimizing policy. 

\bibliography{refs}
\bibliographystyle{ieeetr}
\end{document}